\documentclass[runningheads]{llncs}
\usepackage{amsmath}
\usepackage{hyperref}
\usepackage[T1]{fontenc}
\usepackage{float}
\usepackage{xcolor}

\usepackage{amssymb}
\usepackage{graphicx}

\newcommand{\exampleend}{\hfill\(\triangleleft\)}

\begin{document}
\title{Facet-Defining Inequalities for the Angle-Based DC Optimal Transmission Switching Formulation}
\titlerunning{Facet-Defining Inequalities for the Angle-Based DC-OTS Formulation}
% If the paper title is too long for the running head, you can set
% an abbreviated paper title here
%\orcidID{0009-0009-1374-6833} \orcidID{0000-0002-4843-3564}
\author{Behnam Jabbari-Marand \and
Adolfo R. Escobedo}
\authorrunning{B. Jabbari-Marand et al.}
% First names are abbreviated in the running head.
% If there are more than two authors, 'et al.' is used.
%
\institute{Edward P. Fitts Department of Industrial and Systems Engineering, North Carolina State University, Raleigh, NC 27606, United States of America}
\maketitle              % typeset the header of the contribution
\begin{abstract}
The switching of transmission lines can significantly improve the economic and operational efficiency of power systems. The Direct-Current Optimal Transmission Switching (DC-OTS) problem provides a formal framework for minimizing power generation costs by reconfiguring the transmission network topology under a linearized power flow model. DC-OTS is typically formulated as a mixed-integer linear program that incorporates disjunctive constraints to capture the required relationships between certain variables via big-$M$ parameters. More specifically, these parameters represent upper bounds on voltage angle differences across non-operational transmission lines. In practice, overly conservative (and arbitrary) bounds tend to be used. The belief is that tightening these values requires the solution of the computationally intractable longest path problem. This work challenges that view through a novel polyhedral analysis of the angle-based DC-OTS formulation. We construct an extended formulation for the convex hull of an angle-based relaxation and derive facet-defining inequalities that tighten angle-difference bounds.
\keywords{Optimal transmission switching \and Mixed-integer linear programming \and Extended formulation \and Convex hull}
\end{abstract}\vspace{-22pt}
\section{Introduction}\label{sec1}
As renewable energy sources become more prevalent and grid infrastructure continues to age, the need for cost-effective planning and operational decisions is becoming increasingly critical. Transmission Expansion Planning (TEP) is used to make long-term upgrades \cite{garver1970transmission}, but its deployment is often limited by capital, regulatory, and siting constraints \cite{madrigal2012transmission}. A complementary strategy is topology control via transmission line switching, which has emerged as a systematic approach for improving the operational efficiency of existing infrastructure. This is motivated by the fact that a fixed topology is rarely optimal across all time periods or operating conditions \cite{hedman2011review}. This strategy is mathematically formulated as the Optimal Transmission Switching (OTS) problem, which determines an efficient network configuration by selectively switching transmission lines on or off \cite{fisher2008optimal}. Its benefits are well documented and include improved congestion management and controllability, along with reduced operating costs \cite{hedman2009optimal,escobedo2013topology,numan2023role}.

OTS extends Optimal Power Flow (OPF)—which minimizes generation costs subject to physical and operational constraints—by incorporating binary line-status variables that modify the network topology. The highest-fidelity OPF, Alternating Current OPF (ACOPF), is computationally challenging and offers only local optimality guarantees, as it is non-convex \cite{skolfield2022operations}. Introducing discrete line-switching decisions further exacerbates this intractability. Thus, various studies adopt the linearized Direct Current OPF (DCOPF) model (e.g., \cite{dey2022node,crozier2022feasible,pineda2024tight,taheri2025ac}), which simplifies the physics yet retains relevance to practice. DCOPF-based OTS (DC-OTS) introduces binary variables for reconfiguring the network topology, thereby transforming DCOPF from a linear program (LP) into a mixed-integer linear program (MILP). Although computationally more manageable than its ACOPF counterpart, DC-OTS remains NP-hard \cite{lehmann2014complexity}. The problem is typically modeled with ``big-$M$'' formulations (e.g., \cite{dey2022node},\cite{pineda2024tight},\cite{ramirez2022improving}), most notably the angle-based model of Fisher et al. \cite{fisher2008optimal}. This formulation is challenging to scale, largely due to its disjunctive constraints. We construct an extended formulation for an angle-based relaxation of this model and, using projection, derive valid inequalities (VIs) that describe its convex hull.
\subsection{Model and related work}
The angle-based DC-OTS formulation is presented in~\eqref{eq:tepobj}--\eqref{eq:cons8} \cite{fisher2008optimal}. The power network is modeled as a connected graph comprising a set of buses $\mathcal{B}$ (nodes) and a set of transmission lines $\mathcal{L}$ (edges). Each transmission line $(i, j) \in \mathcal{L}$, where $i, j \in \mathcal{B}$, has a key property known as its electrical reactance $x_{ij} > 0$. Each line allows bidirectional power flow, represented by the variable $f_{ij}$, and is subject to a \emph{thermal capacity} limit $\overline{f}_{ij}$~\eqref{eq:cons3}. The operational status of line $(i,j)$ is modeled by a binary variable $y_{ij}$, where $y_{ij} = 1$ if the line is \emph{active} (i.e., available for power transmission), and $y_{ij} = 0$ if the line is \emph{inactive} (i.e., disconnected from the network). Each bus may host a generator that produces active power $g_n$, subject to an upper limit $\overline{g}_n$\eqref{eq:cons6}. Generation at bus \(n\) incurs a linear cost of \(c_n\) per unit. The objective is to minimize the total cost of generation~\eqref{eq:tepobj}. At each bus $n$, Kirchhoff's Current Law (KCL) requires that the sum of incoming flows $f_{in}$ and generation $g_n$ equals the sum of outgoing flows $f_{ni}$ and demand, which is denoted by $d_n$~\eqref{eq:cons1}. The phase angle at each bus $n \in \mathcal{B}$ is represented by the variable \(\theta_n\). Constraint \eqref{eq:cons5} imposes Ohm's law on each line $(i,j)$: when the line is active, flow is proportional to the angle difference ($\theta_i-\theta_j$), implicitly enforcing Kirchhoff's Voltage Law (KVL). For inactive lines, a sufficiently large parameter $M_{ij}$ is used to ensure redundancy of this constraint.
\begin{subequations}\label{eq:DCOTS}
\begin{align}
 \min &\sum_{n \in \mathcal{B}}  c_n g_n && \label{eq:tepobj}\\[0pt]
\text{s.t.}&\sum_{(i, n) \in  \mathcal{L}}\hspace{-2pt} f_{i n} -\hspace{-2pt} \sum_{(n, i) \in \mathcal{L}} \hspace{-2pt} f_{n i} +g_n=d_n && \forall n \in \mathcal{B}  \label{eq:cons1}\\[0pt]
& -\overline{f}_{i j} y_{i j} \leq f_{i j} \leq \overline{f}_{i j} y_{i j} && \forall(i, j)\in \mathcal{L} \label{eq:cons3} \\
& -{M}_{ij}(1-y_{ij}) \leq f_{ij}x_{ij} - (\theta_i - \theta_j) \leq {M}_{ij}(1-y_{ij}) && \forall (i, j)\in \mathcal{L}  \label{eq:cons5}\\
& 0\leq g_n \leq \overline{g}_n, && \forall n \in \mathcal{B} \label{eq:cons6}\\[0pt]
& y_{i j} \in\{0,1\} && \forall(i, j)\in \mathcal{L} \label{eq:cons8}
\end{align}
\end{subequations}
Big-$M$ parameters ($M_{ij}$ in \eqref{eq:cons5}) impose an upper bound on the voltage angle difference between buses across inactive transmission lines. These constants must be large enough to preserve all integer-feasible solutions (since the optimal line statuses are not known \emph{a priori}), yet sufficiently tight to avoid weak LP relaxations that can substantially increase the computational burden of solving DC-OTS exactly \cite{dey2022node}.

To avoid large big-$M$ values, previous works adopt fixed angle-difference bounds—such as $2\pi$ or $1.2$ radians (e.g., \cite{dey2022node,hedman2010co,villumsen2012investment,peker2018two})—partly motivated by operational considerations. However, DCOPF models phase angles as real-valued (``non-periodic'') variables, and thus imposing fixed bounds can at times exclude feasible solutions~\cite{fattahi2018bound}. In addition, these bounds are typically specified only for adjacent buses and provide no guidance for the $\Theta(|\mathcal{B}|^2)$ non-adjacent pairs in the network. Even with fixed bounds, the computational complexity arising from the combinatorial nature of the solution space remains. This persistent complexity, coupled with the limited scalability of traditional algorithms, has led to an increased reliance on heuristic and data-driven approaches (e.g., \cite{crozier2022feasible,wu2013selection,johnson2021k,li2021data,pineda2024learning}).

In power systems with flexible network topologies, it is often assumed that the NP-hard longest path problem (LPP) needs to be solved to tighten the angle-difference bounds \cite{fattahi2018bound,binato2001new}. To avoid solving the LPP exactly, Aguilar-Moreno et al. \cite{aguilar2025graph} simplify the network and solve a relaxed problem to derive big-$M$ constants. However, the relaxation can further overestimate effective transfer paths, yielding big-$M$ values that remain excessively loose.

Alternatively, Kocuk et al. \cite{kocuk2016strong} and Dey et al. \cite{dey2022node} avoid big-$M$ issues via flow-based cuts. By characterizing the convex hulls of substructures excluding angle variables, they derive VIs that implicitly tighten angle-difference bounds and improve solution performance. Despite these advances, the polyhedral geometry of substructures explicitly including angle variables remains unexplored.

\medskip
\textbf{Key contributions.} This work addresses this gap by analyzing the polyhedral properties of a substructure of DC-OTS that includes angle variables, and it derives VIs that explicitly tighten bounds on their pairwise differences. We make two main contributions: (i) Define and reformulate a relaxation of the feasible set induced by this substructure; and (ii) conduct a polyhedral analysis of the relaxation and derive VIs that describe its convex hull.
 \section{Preliminaries}\label{sec2}
When a bus pair is connected through active lines—those fixed to be in service—a valid bound on their voltage-angle difference can be obtained efficiently from transmission line parameters and network topology. For an \emph{adjacent bus pair} $\langle i,j\rangle$ across an active line $(i,j)$, such a bound is obtained by combining the line capacity constraint~\eqref{eq:cons3} with Ohm's law~\eqref{eq:cons5}, resulting in the inequality
\begin{equation}
    |\theta_i - \theta_j| \leq  \overline{f}_{ij}x_{ij}. \label{eq:capacity-reactance}
\end{equation}
For ease of exposition, henceforth, we refer to the right-hand side of~\eqref{eq:capacity-reactance} as the \emph{weight} of line \((i,j)\) and denote it by the parameter \( w_{ij} \). Skolfield et al. \cite{skolfield2022derivation} derive analogous angle-difference bounds for \emph{any bus pair} $\langle m,n\rangle$ with $m\neq n$ (adjacent or non-adjacent) by summing weights along a path of interconnecting active lines. Let \(\rho_{mn} := \langle i_0:=m, i_1, \dots, i_{k-1}, i_k:=n \rangle\) denote such a path between buses \(m\) and \(n\). Summing the \emph{angle-difference inequalities}~\eqref{eq:capacity-reactance} along \(\rho_{mn}\) creates a telescoping effect on the left-hand side, which simplifies to the net difference \(|\theta_n - \theta_m|\), while the right-hand side accumulates the weights along the path:
\begin{subequations}\label{eq:telescop}
\begin{align} 
  \bigl\lvert \sum_{(i,j) \in \rho_{mn}} (\theta_{i} - \theta_{j}) \bigr\rvert \hspace{-2pt}\leq & \;
\underbrace{|\theta_{i_1} - \theta_{m}|}_{\substack{\leq w_{mi_1}}} +  
\underbrace{|\theta_{i_2} - \theta_{i_1}|}_{\substack{\leq w_{i_1i_2}}} +  
... +  
\underbrace{|\theta_{n} - \theta_{i_{k-1}}|}_{\substack{{\leq w_{i_{k-1}n}}}}&& \\[-4pt]
\Rightarrow |\theta_{n}-\theta_{m}|\hspace{-2pt} \leq &\;  w_{mi_1} + w_{i_1i_2}+... + w_{i_{k-1}n}=w(\rho_{mn}),  && \label{eq:angineq}
  \end{align}
  \end{subequations}
where \(w(\rho_{mn})\) denotes the cumulative weight along the path \(\rho_{mn}\). When buses \( m \) and \( n \) are connected by multiple paths, each path generates a corresponding \emph{path-based inequality} \eqref{eq:angineq} that provides an upper bound on \( |\theta_n - \theta_m| \). Among these, the shortest path yields the tightest bound. Hence, when the pair is connected, a valid and efficient upper bound on the angle difference can be obtained by solving a shortest path problem over the subgraph induced by the active lines~\cite{binato2001new}.

Efficient derivation of bus angle-difference bounds relies on the connectivity between the pairs through paths of active transmission lines. These results can be useful for TEP, where a fixed subgraph of active lines exists \cite{marandpolynomial}. However, the general case is more complex. In OTS, the connectivity between many bus pairs is unknown a priori because the topology is decision-dependent. Consequently, the number of potential paths connecting a pair can be prohibitively large, and bounds derived under fixed or partially known topologies may become invalid.

One workaround is to use a trivial bound for any flexible topology by summing the weights of all lines in the network \cite{tsamasphyrou2000transmission}. Although easy to compute, this bound ignores network structure and is typically very loose. To obtain tighter bounds, Binato et al. \cite{binato2001new} propose total enumeration of all simple paths connecting two buses to identify the longest one, ensuring validity under any switching decision. However, this approach is equivalent to solving the LPP, which is NP-hard~\cite{schrijver2003combinatorial}. Moreover, Fattahi et al. \cite{fattahi2018bound} show that, unless $P=NP$, even approximating the maximum feasible angle difference is intractable. Beyond these computational challenges, LPP-based bounds often yield big-$M$ values that substantially overstate realistic angle differences.

As noted earlier, previous polyhedral studies consider substructures that exclude angle variables and derive VIs that bound total power flows in certain cases. By imposing upper bounds on power flows, these VIs implicitly tighten angle-difference bounds—even though angle variables are not included in the inequalities—by leveraging their physical relationship (see~\eqref{eq:cons5}). In more detail, Kocuk et al. \cite{kocuk2016strong} introduce a \emph{cycle-based formulation} that enforces KVL directly on power flows around the cycles within a \textit{cycle basis}—a minimal set of linearly independent cycles that generates all graph cycles. Building on this formulation, they analyze the polyhedral structure of a cycle-based relaxation in $(f,y)$-space (including KVL, line capacity, and variable domain constraints), and introduce an extended formulation that describes its convex hull. From this analysis, they derive \emph{cycle-based VIs (CVIs)} and show that CVIs are facet-defining for the corresponding convex hull. For a cycle $C$, a CVI bounds the total flow over any line subset $S\subseteq C$; with $C$ oriented as a directed cycle, the inequality takes the form
\begin{equation}\label{eq:cyCVI}
\bigl\lvert \sum_{(i,j)\in S} f_{ij}x_{ij} \bigr\rvert \leq \Delta(S)(|C|-1) - \sum_{(i,j) \in S} [\Delta(S) - w_{ij}] y_{ij} - \Delta(S) \hspace{-4pt} \sum_{(i,j) \in C \setminus S} y_{ij},
\end{equation}
where \( w(C)\hspace{-1pt}:=\hspace{-1pt}\sum_{(i,j)\in C} w_{ij} \), $w(S)\hspace{-1pt}:=\hspace{-1pt}\sum_{(i,j)\in S} w_{ij}$, and \( \Delta(S)\hspace{-1pt}:=\hspace{-1pt}w(S)\hspace{-1pt} - w(C\hspace{-1pt} \setminus\hspace{-1pt} S) \). The inequality is nontrivial if $\Delta(S)\hspace{-2pt}>\hspace{-2pt}0$. On the other hand, the lines in $S$ need not be contiguous.
\begin{figure}[t]
        \centering
        \includegraphics[width=0.84\linewidth]{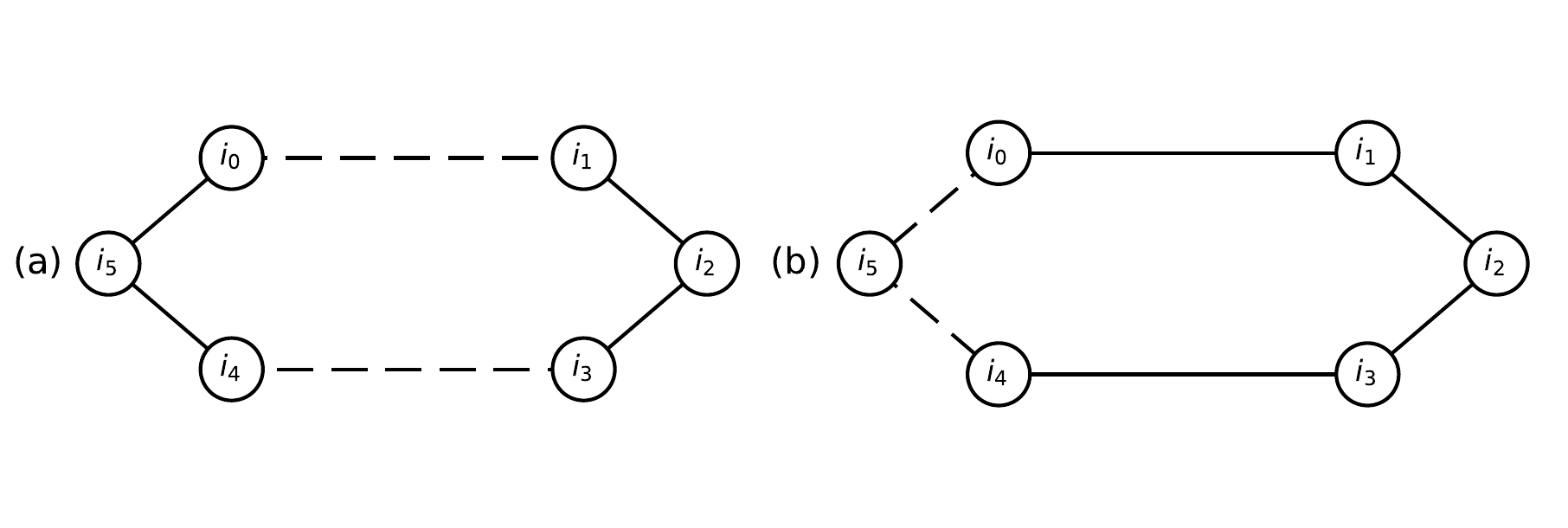}
        \vspace{-20pt}
        \caption{Cycle partitions used by each set of VIs: (a) Cycle-based VIs: solid edges form subset \(S\) whose total weight exceeds half of the cycle weight; dashed edges are \(C \setminus S\); (b) Path- and cycle-based VIs: solid edges are the longer path on the cycle by total weight (subset \(S\)); dashed edges are the shorter path (subset \(C \setminus S\)).}
        \label{fig:chart1}\vspace{-12pt}
\end{figure}

We draw a connection between the path-based inequalities \eqref{eq:telescop} and cycle-based inequalities \eqref{eq:cyCVI}. Unlike the cycle-based framework, which can use any subset $S\subseteq C$, the path-based approach requires the selected lines $S\subset C$ to be contiguous and form a simple path. This means that path-based inequalities can be generated from a cycle, but only when the latter is partitioned into two disjoint, complementary paths with the same terminal buses.
\begin{example}\label{ex:ex1} We assume unit weights on all lines. Figure~\ref{fig:chart1}a shows a selection \( S = \{(i_0, i_5), (i_4, i_5), (i_1, i_2), (i_2, i_3)\} \) (depicted by solid lines), with its complement \( C \setminus S = \{(i_0, i_1), (i_3, i_4)\} \) (dashed lines); both are discontiguous, and thus CVIs \eqref{eq:cyCVI} apply but path-based inequalities \eqref{eq:telescop} do not. By contrast, Figure~\ref{fig:chart1}b illustrates a valid partition for a path-based inequality on the same cycle: \( S = \{(i_0, i_1), (i_1, i_2), (i_2, i_3), (i_3, i_4)\} \) (solid lines) and \( C \setminus S = \{(i_0, i_5), (i_4, i_5)\} \) (dashed lines), where \( S \) forms the longer path and \( C \setminus S \) the shorter path (based on total weight) connecting bus pair \( \langle i_0,i_4\rangle \).\exampleend
\end{example}

CVIs tighten angle differences only indirectly, via bounds implied by power-flow relationships. This leaves an opportunity to introduce VIs that incorporate angle variables to achieve tighter relaxations. To capitalize on this, we analyze a relaxation of the angle-based DC-OTS model that explicitly includes angle differences and derive VIs describing its convex hull.
\section{Characterization of an angle-based relaxation}\label{sec3}
We introduce a new class of VIs that can tighten existing angle-difference bounds for both adjacent and non-adjacent bus pairs. These inequalities build on path-based structures (see \eqref{eq:telescop}) and are derived from an extended formulation of a relaxation of the DC-OTS model defined over a cycle \( C \). We show that, when projected, they become facet-defining in the original $(\theta,y)$-space over $C$.

Using constraints~\eqref{eq:cons3},~\eqref{eq:cons5}, and~\eqref{eq:cons8}, we define the cycle-based relaxation of the DC-OTS feasible region as
\begin{align}\label{eq:rc_def}
\mathcal{R}^{\theta,y,f}_C = \{(\theta,y,f): &f_{ij}x_{ij} - M_{ij}(1 - y_{ij}) \leq \theta_i - \theta_j \leq f_{ij}x_{ij}+ M_{ij}(1 - y_{ij}),\\\nonumber
-&\bar{f}_{ij} y_{ij} \leq f_{ij} \leq \bar{f}_{ij} y_{ij}, \quad y_{ij}\in\{0,1\}\quad \forall(i,j) \in C\},\nonumber
% &y_{ij}\in\{0,1\}, & (i,j)\in C\},
\end{align}
where $M_{ij}\geq \overline{f}_{ij}x_{ij}$ should be sufficiently large. Next, we replace the flow variables \( f_{ij} \) with line-capacity limits to derive implied bounds on voltage angle differences. This yields a relaxation of~\eqref{eq:rc_def}, which is denoted as $\mathcal{R}^{\theta,y}_C\supseteq \mathcal{R}^{\theta,y,f}_C$:
\begin{align}\nonumber
\mathcal{R}^{\theta,y}_C = \{(\theta,y): - &\overline{f}_{ij}x_{ij}y_{ij} - M_{ij}(1 - y_{ij}) \leq \theta_i - \theta_j \leq \overline{f}_{ij}x_{ij}y_{ij}+ M_{ij}(1 - y_{ij}),\\
&y_{ij}\in\{0,1\}\;\; \forall(i,j) \in C\}.
% &y_{ij}\in\{0,1\}, &(i,j)\in C
\end{align}
Furthermore, replacing $\overline{f}_{ij}x_{ij}$ with the line weight notation $w_{ij}$ and rewriting the double-sided inequality in absolute value form gives
\begin{equation}\label{eq:rthetax_def0}
\mathcal{R}^{\theta,y}_C = \{(\theta,y):| \theta_i - \theta_j |\leq w_{ij}y_{ij}+ M_{ij}(1 - y_{ij}), \quad y_{ij}\in\{0,1\} \quad \forall(i,j)\in C\}.
\end{equation}
Collecting the $y_{ij}$ terms and simplifying, set \eqref{eq:rthetax_def0} can be rewritten as
{\setlength{\abovedisplayskip}{4pt}%
 \setlength{\belowdisplayskip}{5pt}%
\begin{equation}\label{eq:rthetax_def}
\mathcal{R}^{\theta,y}_C = \{(\theta,y)\hspace{-2pt}: \hspace{-2pt}|\theta_i - \theta_j |\leq w_{ij}+ (M_{ij}-w_{ij})(1 - y_{ij}), \hspace{3pt} y_{ij}\in\{0,1\}\;\forall (i,j)\in C\}.
\end{equation}

Relaxation \eqref{eq:rthetax_def} consists of angle-difference inequalities only for adjacent buses. To enhance our structural characterization of $\mathcal{R}^{\theta,y}_C$, we write a more granular description by incorporating telescoping inequalities~\eqref{eq:telescop} along all possible paths connecting any pair of buses (including non-adjacent ones) in $C$. Let $\underline{\rho}_{mn}$ and $\overline{\rho}_{mn}$ denote, respectively, the \emph{shorter} and \emph{longer} paths (by total weight, i.e., $ w(\underline{\rho}_{mn}) < w(\overline{\rho}_{mn}) $) connecting $ \langle m, n \rangle$ in $ C $. We add path-based inequalities for all $\langle m, n \rangle\in C$ to yield a more detailed characterization of $\mathcal{R}^{\theta,y}_C$, denoted as $\mathcal{R}^{\theta,y}_{+_C}$:
\begin{align}\nonumber
\hspace{-0.2cm}\mathcal{R}^{\theta,y}_{+_C}=\{(\theta,y):&|\theta_n - \theta_m|\leq \hspace{-0.3cm}\sum_{(i,j)\in \underline{\rho}_{mn}} \hspace{-0.3cm} w_{ij} +\hspace{-0.4cm}\sum_{(i,j)\in \underline{\rho}_{mn}}\hspace{-0.25cm}  (M_{ij}-w_{ij})(1 - y_{ij}) \; \forall m,n \in C,\\[-2pt]\nonumber
&|\theta_n - \theta_m|\leq \hspace{-0.4cm}\sum_{(i,j)\in \overline{\rho}_{mn}} \hspace{-0.3cm} w_{ij} + \hspace{-0.4cm}\sum_{(i,j)\in \overline{\rho}_{mn}}\hspace{-0.25cm} (M_{ij}-w_{ij})(1 - y_{ij}) \; \forall m,n \in C,\\[-2pt]
&y_{ij}\in\{0,1\}\; \; \forall (i,j)\in C\}. \label{eq:rtheta_x_c_plus}
\end{align}
The inequalities in \eqref{eq:rtheta_x_c_plus} depend only on angle differences (as opposed to absolute angles) and, in DCOPF, only relative angles within each connected component are meaningful. Hence, we define $\delta\theta_{mn} := \theta_n - \theta_m$ for clarity of exposition, with $\delta\theta_{mn}$ representing the single degree of freedom in $\theta_n - \theta_m$, rewriting it as 
\begin{subequations}\label{eq:set00}
\begin{align}
      \hspace{-5pt}  \mathcal{R}^{\delta\theta,y}_{+_C}=\{(\delta\theta,y)\hspace{-3pt}:& \left|\delta\theta_{mn}\right|\leq   w(\underline{\rho}_{mn}) +\hspace{-11pt}\sum_{(i,j)\in \underline{\rho}_{mn}} \hspace{-8pt}(M_{ij}-w_{ij})(1 - y_{ij})   \;\forall m,n \in C,\label{eq:shorterset}\\[-2pt]
       & \left|\delta\theta_{mn}\right|\leq  w(\overline{\rho}_{mn}) + \hspace{-11pt}\sum_{(i,j)\in \overline{\rho}_{mn}}\hspace{-8pt}(M_{ij}-w_{ij})(1 - y_{ij}) \;\forall m,n \in C,\\[-2pt]\nonumber
       & y_{ij}\in\{0,1\} \; \;\forall(i,j) \in C\;\}.
\end{align}
\end{subequations}
The path-based inequalities in~\eqref{eq:set00} are tight when all lines along the corresponding path are active, but can become quite loose otherwise. For a bus pair $\langle m,n\rangle\in C$, if a path—say $\underline{\rho}_{mn}$—is inactive, then $\sum_{(i,j) \in \underline{\rho}_{mn}} (1 - y_{ij}) \ge 1$ (i.e., at least one line is inactive). In this case, the right-hand side of~\eqref{eq:shorterset} can become overly large because it sums multiple $M_{ij}$ values. This motivates incorporating a fallback bound via the inequality $\left|\delta\theta_{mn}\right| \leq M_{mn}$ for each bus pair $\langle m,n\rangle$, where $M_{mn}$ must be no smaller than the longest-path weight of the pair.

Next, to simplify notation and avoid solving LPP, we adopt a single big-$M$ for all pairs—the worst-case bound $M := \sum_{(i,j)\in \mathcal{L}} w_{ij}$ \cite{tsamasphyrou2000transmission}—and, without loss of generality, replace all coefficients $M_{ij}$ in \eqref{eq:set00} with $M$.

Finally, to simplify the characterization of this structure, note that set \eqref{eq:set00} is the intersection of the feasible sets induced by the path-based inequalities for all $\langle m,n\rangle \in C$. We isolate the set for each pair $\langle m,n\rangle$ and denote it by $\mathcal{R}^{\delta\theta,y}_{+_{\langle m,n \rangle}}$:
\begin{subequations}\label{eq:set}
\begin{align}
        \mathcal{R}^{\delta\theta,y}_{+_{\langle m,n \rangle}}=&\{\;(\delta\theta,y)\in \mathbb{R}^1 \times \{0,1\}^{|C|}:\notag \\[-2pt]
        & \left|\delta\theta_{mn}\right|\leq  w(\underline{\rho}_{mn}) + \sum_{(i,j)\in \underline{\rho}_{mn}} (M -w_{ij})(1 - y_{ij}),\label{eq:shorter}\\[-2pt]
       & \left|\delta\theta_{mn}\right|\leq w(\overline{\rho}_{mn}) +\sum_{(i,j)\in \overline{\rho}_{mn}} (M -w_{ij})(1 - y_{ij}),\label{eq:longer}  \\[-5pt]
       & \left|\delta\theta_{mn}\right|\leq M,\label{eq:validbound}\\
       &y_{ij}\in\{0,1\}\quad \forall(i,j) \in C\;\}.
\end{align}
\end{subequations}
\subsection{Derivation of valid inequalities}
We construct the extended formulation \eqref{eq:extendedform} to explicitly capture the disjunctive structure of $\mathcal{R}^{\delta\theta,y}_{+_{\langle m,n \rangle}}$. To that end, define binary variables \(\underline{z}_{mn}\) and \(\overline{z}_{mn}\) to indicate whether the respective \emph{path is active}.
That is, \(\underline{z}_{mn} = 1\) (resp. \(\overline{z}_{mn} = 1\)) if and only if all lines on \(\underline{\rho}_{mn}\) (resp. \(\overline{\rho}_{mn}\)) are active. In addition, define \(\overline{\zeta}_{mn}=\overline{z}_{mn}(1 - \underline{z}_{mn})\), which equals 1 if and only if \(\overline{\rho}_{mn}\) is active but \(\underline{\rho}_{mn}\) is not; this ensures that the longer path is considered only if the shorter is inactive. The angle-difference bound is then determined based on these indicators: if neither path is active, the bound on \(\lvert \delta\theta_{mn}\rvert\) defaults to \(M\); otherwise, it tightens according to the active path(s).
\begin{subequations}\label{eq:extendedform}
\begin{align}
  & 
  \left\{ 
  \begin{array}{l}
     \underline{z}_{mn}  \leq y_{ij} \quad \forall (i,j) \in \underline{\rho}_{mn}, \\
     \underline{z}_{mn} \geq \sum_{(i,j) \in \underline{\rho}_{mn}} y_{ij} -|\underline{\rho}_{mn}|+ 1,
  \end{array}
  \right.  \label{eq:lowlower0} \\
  &
  \left\{
  \begin{array}{l}
     \overline{z}_{mn}  \leq y_{ij} \quad \forall (i,j) \in \overline{\rho}_{mn}, \\
     \overline{z}_{mn} \geq \sum_{(i,j) \in \overline{\rho}_{mn}} y_{ij} -\left|\overline{\rho}_{mn}\right|+ 1,
  \end{array}
  \right. \label{eq:uplower0} \\
  & 
  \left\{
  \begin{array}{l}
     0\leq\overline{\zeta}_{mn} \leq \overline{z}_{mn}, \\
     0\leq\overline{\zeta}_{mn} \leq 1 - \underline{z}_{mn}, \\
     \overline{\zeta}_{mn} \geq \overline{z}_{mn} - \underline{z}_{mn},
  \end{array}
  \right. \label{eq:8a} \\
  &
  \left|\delta\theta_{mn}\right| \leq 
  w(\underline{\rho}_{mn})\,\underline{z}_{mn} \;+\;
  w(\overline{\rho}_{mn})\,\overline{\zeta}_{mn} \;+\;
  M (1-\underline{z}_{mn}-\overline{\zeta}_{mn}), \label{eq:convcomb} \\
  &
  y_{ij}\in [0,1]^{|C|}, \; \underline{z}_{mn}, \; \overline{z}_{mn} \in [0,1]. \label{eq:vardom}
\end{align}
\end{subequations}
To introduce Theorem~\ref{prop:theo01}, we recall two basic concepts. A formulation is \emph{locally ideal} if every vertex of its LP relaxation assigns integral values to the integer variables, and it is \emph{sharp} if the projection of its LP relaxation onto the original-variable space equals the convex hull of the mixed-integer set \cite{vielma2015mixed}.

\medskip
\begin{theorem}\label{prop:theo01}
Define polytope $\mathcal{P}_{\langle m,n \rangle} :=\{ (\delta\theta, y, \underline{z}, \overline{z}, \overline{\zeta}): \eqref{eq:extendedform}\}$. Linear system \eqref{eq:extendedform} is an extended formulation of
$\operatorname{conv}(\mathcal{R}^{\delta\theta,y}_{+_{\langle m,n\rangle}})$. Equivalently, $\mathcal{P}_{\langle m,n \rangle}$ is sharp with respect to $(\delta\theta,y)$—that is, $\operatorname{conv}(\mathcal{R}^{\delta\theta,y}_{+_{\langle m,n \rangle}})= \operatorname{proj}_{\delta\theta, y}\bigl(\mathcal{P}_{\langle m,n \rangle}\bigr)$.
\end{theorem}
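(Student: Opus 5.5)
The plan is to prove the two inclusions $\operatorname{conv}(\mathcal{R}^{\delta\theta,y}_{+_{\langle m,n \rangle}})\subseteq \operatorname{proj}_{\delta\theta,y}(\mathcal{P}_{\langle m,n \rangle})$ and $\operatorname{proj}_{\delta\theta,y}(\mathcal{P}_{\langle m,n \rangle})\subseteq\operatorname{conv}(\mathcal{R}^{\delta\theta,y}_{+_{\langle m,n \rangle}})$ separately. Throughout, I use that $\underline{\rho}_{mn}$ and $\overline{\rho}_{mn}$ partition the edge set of $C$ into two disjoint paths, and that $0<w(\underline{\rho}_{mn})< w(\overline{\rho}_{mn})\le M$.

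\emph{First inclusion.} Since $\mathcal{P}_{\langle m,n \rangle}$ is convex, it suffices to lift every point of $\mathcal{R}^{\delta\theta,y}_{+_{\langle m,n \rangle}}$. Given $(\delta\theta,y)$ with $y$ binary, set $\underline{z}_{mn}=\prod_{\underline{\rho}_{mn}}y_{ij}$, $\overline{z}_{mn}=\prod_{\overline{\rho}_{mn}}y_{ij}$ and $\overline{\zeta}_{mn}=\overline{z}_{mn}(1-\underline{z}_{mn})$. These values satisfy \eqref{eq:lowlower0}--\eqref{eq:8a}. A three-case check then shows that the right-hand side of \eqref{eq:convcomb} equals the minimum of the right-hand sides of \eqref{eq:shorter}--\eqref{eq:validbound}. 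The cases are: shorter path active; only the longer path active; neither active. In the last two cases \eqref{eq:shorter} has right-hand side at least $w(\underline{\rho}_{mn})+M-w_{ij}\ge M$. Hence \eqref{eq:convcomb} holds.

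\emph{Second inclusion.} Let $Q$ be the polytope in $(y,\underline{z},\overline{z},\overline{\zeta})$ defined by \eqref{eq:lowlower0}--\eqref{eq:8a} and \eqref{eq:vardom}. Let $U(q)$ denote the right-hand side of \eqref{eq:convcomb}; it is affine in $q$ and nonnegative on $Q$, because \eqref{eq:8a} gives $1-\underline{z}_{mn}-\overline{\zeta}_{mn}\ge 0$. Then
\[
\mathcal{P}_{\langle m,n \rangle}=\{(\delta\theta,q): q\in Q,\ |\delta\theta|\le U(q)\}.
\]
I first reduce to integrality of $Q$. Take $(\delta\theta,q)\in\mathcal{P}_{\langle m,n \rangle}$ and write $\delta\theta=tU(q)$ with $t\in[-1,1]$. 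If $q=\sum_k\lambda_k q^k$ with $q^k$ integral points of $Q$, then affinity of $U$ gives
\[
(\delta\theta,q)=\sum_k\lambda_k\bigl(tU(q^k),q^k\bigr).
\]
Each $(tU(q^k),y^k)$ lies in $\mathcal{R}^{\delta\theta,y}_{+_{\langle m,n \rangle}}$ by the case analysis of the first inclusion read in reverse: for integral $q^k$ the $z$'s are forced to be the products above, and $U(q^k)$ is the tightest of the bounds \eqref{eq:shorter}--\eqref{eq:validbound}. It therefore remains to show that every point of $Q$ is a convex combination of integral points of $Q$.

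\emph{Integrality of $Q$ (main obstacle).} I would prove this with a probabilistic coupling, i.e., by exhibiting a probability distribution on integral points of $Q$ whose mean is $q$.
\begin{enumerate}
\item For a single path $\rho$ of length $k$, the set $\{(y,z)\in[0,1]^{k+1}: z\le y_i\ \forall i,\ z\ge\sum_i y_i-k+1\}$ is the standard AND-linearization polytope. It is integral, with vertices $(y,\prod_i y_i)$; I will quote this or verify it directly. Consequently, $(y|_\rho,z)$ is the mean of a distribution on binary $y|_\rho$ such that $\Pr[\text{all } y_i=1]=z$.
\item Equivalently, conditionally on the event $Z=1$ the $y|_\rho$ are all ones. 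Conditionally on $Z=0$ they follow some distribution $D_0$ on non-all-ones vectors with mean $(y-z\mathbf{1})/(1-z)$ (assuming $z<1$).
\item Next I couple the two indicators. Constraint \eqref{eq:8a} says $\max(0,\overline{z}-\underline{z})\le\overline{\zeta}\le\min(\overline{z},1-\underline{z})$. These are exactly the Fréchet conditions for the existence of a joint distribution of Bernoulli variables $(\underline{Z},\overline{Z})$ with marginals $\underline{z}_{mn},\overline{z}_{mn}$ and $\Pr[\overline{Z}=1,\underline{Z}=0]=\overline{\zeta}_{mn}$.
\item Sample $(\underline{Z},\overline{Z})$ from this joint law. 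Then, independently for each of the two disjoint paths, sample $y|_\rho$ from the conditional law given the path's indicator.
\end{enumerate}
Every sampled point is integral and lies in $Q$, and the mean recovers $q$. This is where disjointness of the two paths is essential, since it lets the two $y$-blocks be chosen independently. Finally, I would double-check the degenerate cases $\underline{z}_{mn}\in\{0,1\}$ or $\overline{z}_{mn}\in\{0,1\}$, where the conditional laws in step 2 are undefined, but there only one conditioning event occurs and the construction still goes through.
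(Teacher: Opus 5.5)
Your argument is correct, but it takes a different route from the paper. The paper splits $\mathcal{R}^{\delta\theta,y}_{+_{\langle m,n\rangle}}$ into the three disjuncts $\mathcal{P}^{(1,*)}$, $\mathcal{P}^{(0,1)}$, $\mathcal{P}^{(0,0)}$. It then asserts that $\mathcal{P}_{\langle m,n\rangle}$ is locally ideal: \eqref{eq:lowlower0} and \eqref{eq:uplower0} are ideal encodings of the two conjunctions, and \eqref{eq:8a} collapses to $\overline{\zeta}_{mn}=\overline{z}_{mn}(1-\underline{z}_{mn})$ once $(\underline{z}_{mn},\overline{z}_{mn})$ is integral. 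Finally it cites the general fact that locally ideal formulations are sharp. You instead prove both inclusions directly. Your three-case identity, namely that the right-hand side $U$ of \eqref{eq:convcomb} equals the minimum of the three original bounds at integral points, is the precise form of the paper's remark about a convex combination of per-case bounds. The scaling $\delta\theta_{mn}=tU(q)$, with $U$ affine and nonnegative on $Q$, then carries any convex decomposition of $q$ over to one of $(\delta\theta_{mn},y)$, which replaces the citation. The more substantive difference is the integrality step. The paper infers integrality of the joint $(y,\underline{z},\overline{z},\overline{\zeta})$ system from integrality of its pieces. That is not automatic, since an extreme point of the combined system need not project to an extreme point of the $(y,\underline{z},\overline{z})$ subsystem. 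Your coupling argument actually proves it: McCormick coincides with the Fr\'echet bounds for the pair of indicators, and the two $y$-blocks are then sampled conditionally independently. It also shows that edge-disjointness of $\underline{\rho}_{mn}$ and $\overline{\rho}_{mn}$ is exactly what makes this gluing work. The paper's route is shorter, frames the result in disjunctive/locally-ideal language, and states integrality of the vertices of $\mathcal{P}_{\langle m,n\rangle}$ explicitly. Your proof yields that too: on $Q$ one has $U\ge w(\underline{\rho}_{mn})>0$, so every vertex satisfies $\delta\theta_{mn}=\pm U(q)$, and each of those two faces is an affine copy of $Q$.
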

\begin{proof}
\emph{Disjunctive Decomposition of $\mathcal{R}^{\delta\theta,y}_{+_{\langle m,n \rangle}}$.} 
Set \(\mathcal{R}^{\delta\theta,y}_{+_{\langle m,n \rangle}}\) can be partitioned according to three mutually exclusive and exhaustive cases: (i) the shorter path \(\underline{\rho}_{mn}\) is active, and \(\overline{\rho}_{mn}\) is either active or inactive; (ii) the shorter path is inactive, and the longer path \(\overline{\rho}_{mn}\) is active; and (iii) neither path is fully active. Each case corresponds to a disjunctive polyhedral region:
\begin{itemize}
   \item \textbf{Case i:} We have that \(\underline{z}_{mn} = 1\). Define the polytope
 \begin{subequations}\label{eq:disj1}
        \begin{align}
      \hspace{60pt}  \mathcal{P}^{(1,*)}_{\langle m,n \rangle} := \bigl\{ (\delta\theta, y): & \left|\delta\theta_{mn}\right| \leq w(\underline{\rho}_{mn}), \label{eq:shortactive0}\\[-1pt]
          \hspace{60pt}   & \sum_{(i,j) \in \underline{\rho}_{mn}} y_{ij} = |\underline{\rho}_{mn}|, \label{eq:shortactive1}\\[-1pt]
         \hspace{60pt}    & 0 \leq y_{ij} \leq 1, \quad \forall (i,j) \in C \bigr\}.  \label{eq:yvardom}
        \end{align}
    \end{subequations}
    \item \textbf{Case ii:} We have \(\overline{\zeta}_{mn}=1\) (equivalently, \((\underline{z}_{mn}, \overline{z}_{mn}) = (0,1)\)). Define the polytope $\mathcal{P}^{(0,1)}_{\langle m,n \rangle} :=\bigl\{ (\delta\theta, y) : \eqref{eq:yvardom},$
    \begin{subequations}\label{eq:disj2}
\begin{align}
   & \hspace{4.13cm} \left|\delta\theta_{mn}\right| \leq w(\overline{\rho}_{mn}),\\[-1pt]
   & \hspace{4.1cm}\sum_{(i,j) \in \overline{\rho}_{mn}} y_{ij} = \left|\overline{\rho}_{mn}\right|, \\[-1pt]
   & \hspace{4.1cm}\sum_{(i,j) \in \underline{\rho}_{mn}} y_{ij} \leq|\underline{\rho}_{mn}|- 1\bigr\}. \label{eq:shortinactive}
\end{align}
\end{subequations}
    \item \textbf{Case iii:} We have that $\underline{z}_{mn} = 0$ and $\overline{\zeta}_{mn}=\overline{z}_{mn}(1-\underline{z}_{mn}) = 0$. Define the polytope $\mathcal{P}^{(0,0)}_{\langle m,n \rangle} := \bigl\{ (\delta\theta, y):\eqref{eq:yvardom},\eqref{eq:shortinactive},$
    \begin{subequations}\label{eq:disj2}
\begin{align}
   & \hspace{5.5cm}\left|\delta\theta_{mn}\right| \leq M,\\[-1pt]
   & \hspace{5.4cm}\sum_{(i,j) \in \overline{\rho}_{mn}} y_{ij} \leq|\overline{\rho}_{mn}|- 1\bigr\}. \label{eq:nonactive}
\end{align}
\end{subequations}
\end{itemize}
Polytopes $\mathcal{P}^{(1,*)}_{\langle m,n \rangle}$, $\mathcal{P}^{(0,1)}_{\langle m,n \rangle} $, and $\mathcal{P}^{(0,0)}_{\langle m,n \rangle}$ are mutually exclusive and collectively exhaustive. Therefore, every integer-feasible point of $\mathcal{R}^{\delta\theta,y}_{+_{\langle m,n \rangle}}$ lies in exactly one of these regions, i.e., $\mathcal{R}^{\delta\theta,y}_{+_{\langle m,n \rangle}} = \mathcal{P}^{(1,*)}_{\langle m,n \rangle} \cup \mathcal{P}^{(0,1)}_{\langle m,n \rangle} \cup \mathcal{P}^{(0,0)}_{\langle m,n \rangle}$, with the union being disjoint over integer feasible solutions. Consequently, the convex hull is expressed as $\operatorname{conv}(\mathcal{R}^{\delta\theta,y}_{+_{\langle m,n \rangle}}) = \operatorname{conv} \left( \mathcal{P}^{(1,*)}_{\langle m,n \rangle} \cup \mathcal{P}^{(0,1)}_{\langle m,n \rangle} \cup \mathcal{P}^{(0,0)}_{\langle m,n \rangle} \right).$

\emph{Extended Formulation.} Linear system \eqref{eq:extendedform} describes $\operatorname{conv}(\mathcal{R}^{\delta\theta,y}_{+_{\langle m,n \rangle}})$ in an extended space by (i) linking path-status indicators to the corresponding line-status variables, (ii) enforcing mutual exclusivity so that exactly one polytope is active at any extreme point, and (iii) expressing the angle-difference bound as a convex combination of the bounds corresponding to each active polytope.

  The three disjuncts differ only in (i) which path is active and (ii) which angle-difference bound applies. This shared structure enables a compact lifting using path-status indicators, rather than duplicating variables for each disjunct. Constraints \eqref{eq:lowlower0} link $\underline{z}_{mn}$ to the activity of the lines in $\underline{\rho}_{mn}$, and \eqref{eq:uplower0} similarly link $\overline{z}_{mn}$ to the lines in $\overline{\rho}_{mn}$. With bounds \(0 \le y,\underline{z}_{mn},\overline{z}_{mn} \le 1\), this gives an \emph{ideal} encoding of the conjunctions, implying integrality of all LP extreme points in \((y,\underline{z},\overline{z})\).

The variable $\overline{\zeta}_{mn}$ encodes the product $\overline{z}_{mn}(1 - \underline{z}_{mn})$, ensuring that the longer path is only selected when the shorter path is inactive. Constraints \eqref{eq:8a}, together with $\underline{z}_{mn},\overline{z}_{mn}\in[0,1]$, are the McCormick inequalities and describe the convex hull of the graph $\{(\underline{z},\overline{z},\overline{\zeta}):\overline{\zeta}_{mn}=\overline{z}_{mn}(1-\underline{z}_{mn})\}$ over $[0,1]^2$. With $\underline{z}_{mn}$ and $\overline{z}_{mn}$ integral at LP extreme points, the McCormick system reduces to the equality $\overline{\zeta}_{mn}=\overline{z}_{mn}(1-\underline{z}_{mn})$. Hence, at LP extreme points $\overline{\zeta}_{mn}\in\{0,1\}$.

In summary, at LP extreme points, $\underline{z}_{mn}$, $\overline{\zeta}_{mn}$, $1 - \underline{z}_{mn} - \overline{\zeta}_{mn}$ are in $\{0,1\}$ and sum to $1$, and hence exactly one of $\mathcal{P}^{(1,*)}$, $\mathcal{P}^{(0,1)}$, $\mathcal{P}^{(0,0)}$ is active, preserving the original disjunctive structure. Given mutual exclusivity, the angle-difference bound in the lifted space equals a convex combination of the per-case bounds, yielding \eqref{eq:convcomb}. As established above, \eqref{eq:extendedform} has LP extreme points with $(y,\underline{z},\overline{z},\overline{\zeta})$ integral—hence it is \emph{locally ideal}. Locally ideal formulations are sharp \cite{vielma2015mixed} and, hence, projecting $\mathcal{P}_{\langle m,n \rangle}$ onto $(\delta\theta,y)$ yields $\operatorname{conv}(\mathcal{R}^{\delta\theta,y}_{+_{\langle m,n \rangle}})$. \hfill $\square$
\end{proof}
Theorem~\ref{prop:theo01} shows that \( \operatorname{conv}(\mathcal{R}^{\delta\theta,y}_{+_{\langle m,n \rangle}} )\) is obtained by projecting the lifted polytope \( \mathcal{P}_{\langle m,n \rangle} \). Theorem~\ref{theo:theo2} projects \eqref{eq:extendedform} to derive cycle-induced path-based VIs (C-PVIs) that describe this convex hull in the original variables \( (\delta\theta, y) \).

\medskip
\begin{theorem} \label{theo:theo2}
For conciseness, define $\Delta(\rho):=w(\overline{\rho}_{mn}) - w(\underline{\rho}_{mn})$ and
$\Delta(M):= M - w(\overline{\rho}_{mn})$. The polytope \(\operatorname{conv}(\mathcal{R}^{\delta\theta,y}_{+_{\langle m,n \rangle}}) \) can be described by \eqref{eq:yvardom} and the C-PVI given by
\begin{equation}\label{eq:fori3}
\left|\delta\theta_{mn}\right| \le w(\underline{\rho}_{mn}) 
+ \Bigl(|\underline{\rho}_{mn}| - \hspace{-9pt}\sum_{(i,j)\in \underline{\rho}_{mn}}\hspace{-6pt} y_{ij} \Bigr) \Delta(\rho)+ \Bigl(\left|\overline{\rho}_{mn}\right| - \hspace{-9pt}\sum_{(i,j)\in \overline{\rho}_{mn}} \hspace{-6pt} y_{ij} \Bigr) \Delta(M).
\end{equation}
\end{theorem}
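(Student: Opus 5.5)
The plan is to obtain \eqref{eq:fori3} by projecting the extended formulation \eqref{eq:extendedform} onto $(\delta\theta,y)$, and then to show that the resulting description coincides with $\operatorname{conv}(\mathcal{R}^{\delta\theta,y}_{+_{\langle m,n\rangle}})$. Throughout, write $a:=|\underline{\rho}_{mn}|-\sum_{(i,j)\in\underline{\rho}_{mn}}y_{ij}$ and $b:=|\overline{\rho}_{mn}|-\sum_{(i,j)\in\overline{\rho}_{mn}}y_{ij}$. Since $M=\sum_{(i,j)\in\mathcal{L}}w_{ij}\ge w(C)\ge w(\overline{\rho}_{mn})$, both $\Delta(\rho)$ and $\Delta(M)$ are nonnegative.

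\emph{Step 1: validity.} I would check \eqref{eq:fori3} on the three disjuncts from the proof of Theorem~\ref{prop:theo01}.
In $\mathcal{P}^{(1,*)}$ we have $a=0$, so the right-hand side is at least $w(\underline{\rho}_{mn})$.
In $\mathcal{P}^{(0,1)}$ we have $a\ge1$ and $b=0$, so it is at least $w(\underline{\rho}_{mn})+\Delta(\rho)=w(\overline{\rho}_{mn})$.
In $\mathcal{P}^{(0,0)}$ we have $a,b\ge1$, so it is at least $w(\overline{\rho}_{mn})+\Delta(M)=M$.
Hence \eqref{eq:fori3} is valid, and by convexity it is valid for the hull.

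\emph{Step 2: projection.} Rewrite \eqref{eq:convcomb} as $|\delta\theta_{mn}|\le M-(M-w(\underline{\rho}_{mn}))\underline{z}_{mn}-(M-w(\overline{\rho}_{mn}))\overline{\zeta}_{mn}$. Fourier--Motzkin elimination of $\overline{\zeta}_{mn}$, $\overline{z}_{mn}$ and $\underline{z}_{mn}$ only requires pairing this inequality with the lower bounds on these variables:
\begin{itemize}
\item $\overline{\zeta}_{mn}\ge\overline{z}_{mn}-\underline{z}_{mn}$ together with $\overline{\zeta}_{mn}\ge0$;
\item $\overline{z}_{mn}\ge 1-b$ together with $\overline{z}_{mn}\ge0$;
\item $\underline{z}_{mn}\ge 1-a$ together with $\underline{z}_{mn}\ge0$.
\end{itemize}
The upper-bound constraints ($\underline{z}_{mn}\le y_{ij}$, $\overline{\zeta}_{mn}\le \overline{z}_{mn}$, and so on) only reproduce box constraints on $y$.

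Substituting $\overline{\zeta}_{mn}\ge (1-b)-\underline{z}_{mn}$ and then $\underline{z}_{mn}\ge1-a$ gives the coefficient of $\underline{z}_{mn}$ as $-(w(\overline{\rho}_{mn})-w(\underline{\rho}_{mn}))=-\Delta(\rho)$. This produces exactly
\[
|\delta\theta_{mn}|\le M-\Delta(M)(1-b)-\Delta(\rho)(1-a),
\]
which expands to \eqref{eq:fori3}. The remaining combinations produce inequalities such as $|\delta\theta_{mn}|\le M$ and $|\delta\theta_{mn}|\le w(\underline{\rho}_{mn})+a(M-w(\underline{\rho}_{mn}))$. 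These must then be shown to be dominated by \eqref{eq:fori3} together with \eqref{eq:yvardom}, or else be absorbed.

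\emph{Step 3 (main obstacle): sufficiency.} The set $Q$ defined by \eqref{eq:yvardom} and \eqref{eq:fori3} has the form $|\delta\theta_{mn}|\le g(y)$ with $g$ affine and positive on the cube. Its vertices are therefore the points with $y\in\{0,1\}^{|C|}$ and $\delta\theta_{mn}=\pm g(y)$. To conclude $Q\subseteq\operatorname{conv}(\mathcal{R})$, one needs $g(y)$ to equal the true disjunctive bound at every binary $y$.

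This holds when $a,b\in\{0,1\}$. When $a\ge2$ or $b\ge2$, however, $g(y)$ overshoots the bound: for instance $a=2$, $b=0$ gives $g(y)=w(\overline{\rho}_{mn})+\Delta(\rho)$. Since $y$ is then a cube vertex, the point $(\pm g(y),y)$ cannot be a convex combination of points of $\mathcal{R}$ with different $y$.

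I would first try to close this gap with the discarded Fourier--Motzkin inequalities from Step 2, combined with the fallback bound $|\delta\theta_{mn}|\le M$ from \eqref{eq:validbound}. If that fails, I would state the description with these extra inequalities included, or restrict attention to $y$ with at most one inactive line per path. Checking exactly which of these inequalities are needed is where I expect the real work to lie.
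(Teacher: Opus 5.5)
\textbf{There is a genuine gap, but it lies in the statement itself.} Your Steps 1 and 2 are sound, and the doubt in Step 3 is justified: the theorem is false as written, so no proof of it exists.

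The paper's own proof breaks at exactly the elimination you were uneasy about. To eliminate $\overline{\zeta}_{mn}$ from \eqref{eq:nlpvi}, it uses only $\overline{\zeta}_{mn}\ge\overline{z}_{mn}-\underline{z}_{mn}$. The binding lower bound is actually $\max\{0,\overline{z}_{mn}-\underline{z}_{mn}\}$, which differs from the one used whenever $\underline{z}_{mn}>\overline{z}_{mn}$, for example when the shorter path is active and the longer one is not. Likewise, the paper substitutes only $\underline{z}_{mn}\ge 1-a$ and $\overline{z}_{mn}\ge 1-b$, and silently drops $\underline{z}_{mn},\overline{z}_{mn}\ge 0$.

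Your own diagnosis is wrong in one place: $g$ is \emph{not} exact for all $a,b\in\{0,1\}$. At $(a,b)=(0,1)$ the shorter path is active, so the true bound is $w(\underline{\rho}_{mn})$. But $g(y)=w(\underline{\rho}_{mn})+\Delta(M)$, and $\Delta(M)\ge w(C)-w(\overline{\rho}_{mn})=w(\underline{\rho}_{mn})>0$.

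A concrete instance: take a unit-weight triangle (the whole network, so $M=3$) with $\underline{\rho}_{mn}=\{(m,n)\}$ and $\overline{\rho}_{mn}=\{(m,k),(k,n)\}$. The point $(\delta\theta_{mn},y_{mn},y_{mk},y_{kn})=(2,1,0,1)$ satisfies \eqref{eq:fori3} with equality. However, $y=(1,0,1)$ is a vertex of the cube, and every point of $\mathcal{R}^{\delta\theta,y}_{+_{\langle m,n\rangle}}$ with this $y$ has $|\delta\theta_{mn}|\le 1$ by \eqref{eq:shorter}. Such a point exists for every cycle and every pair, so your fallback of allowing at most one inactive line per path does not rescue the statement.

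What does work is your first option, carried out completely. Pair every lower bound with every upper bound, and discard the dominated output $|\delta\theta_{mn}|\le M-(1-a)\Delta(\rho)$. Together with \eqref{eq:yvardom}, this gives
\[
|\delta\theta_{mn}|\le\min\bigl\{M,\; w(\underline{\rho}_{mn})+(M-w(\underline{\rho}_{mn}))\,a,\; w(\overline{\rho}_{mn})+\Delta(M)\,b,\; w(\underline{\rho}_{mn})+\Delta(\rho)\,a+\Delta(M)\,b\bigr\}.
\]
The second term is the inequality you already obtained from $\overline{\zeta}_{mn}\ge 0$; it closes the $a=0$ gap. The third term, which you did not list, comes from pairing $\underline{z}_{mn}\ge 0$ with $\overline{z}_{mn}\ge 1-b$. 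At $b=0$ it gives $w(\overline{\rho}_{mn})$, so it is what closes your $a\ge 2$, $b=0$ gap.

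Sufficiency of this corrected description should come from Theorem~\ref{prop:theo01} plus the complete elimination. A direct computation of the concave envelope of the disjunctive bound over $\{0,1\}^{|C|}$ yields the same four terms. Your Step 3 vertex argument does not carry over here: with several affine pieces, the set can have vertices at fractional $y$ where two pieces cross.
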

\begin{proof}
Theorem~\ref{prop:theo01} establishes that \( \operatorname{conv}(\mathcal{R}^{\delta\theta,y}_{+_{\langle m,n \rangle}}) = \operatorname{proj}_{\delta\theta,y}(\mathcal{P}_{\langle m,n \rangle}) \). It remains to show that inequalities \eqref{eq:yvardom} and \eqref{eq:fori3} fully characterize this projection.

\medskip
The angle-difference inequality in~\eqref{eq:convcomb} can be equivalently written as
\begin{equation}\label{eq:nlpvi}
    \left|\delta\theta_{mn}\right| \leq M-\underline{z}_{mn} \bigl(M-w(\underline{\rho}_{mn})\bigr) - \overline{\zeta}_{mn} \bigl(M-w(\overline{\rho}_{mn})\bigr).
\end{equation}
Since the coefficient of $\overline{\zeta}_{mn}$ in~\eqref{eq:nlpvi} is non-positive, the tightest bound in the projection is obtained by replacing it with its lower bound. Applying Fourier–Motzkin elimination and using the bound $\overline{\zeta}_{mn} \geq \overline{z}_{mn} - \underline{z}_{mn}$ from~\eqref{eq:8a} yields
 \begin{equation} \label{eq:fori1}
\left|\delta\theta_{mn}\right|
\;\le\;
M
- \underline{z}_{mn} \bigl( w(\overline{\rho}_{mn}) - w(\underline{\rho}_{mn}) \bigr)
- \overline{z}_{mn} \bigl( M - w(\overline{\rho}_{mn}) \bigr).
\end{equation}
Because the extended formulation is locally ideal, $(\underline{z}_{mn},\overline{z}_{mn})$ take integral values at all LP extreme points. Consequently, the projection of $\mathcal{P}_{\langle m,n \rangle}$ onto the space $(\delta\theta, y, \underline{z}, \overline{z})$ preserves the convex hull and is given by
\begin{equation}
    \hspace{-4pt} \operatorname{proj}_{\delta\theta,y,\underline z,\overline z}\bigl(\hspace{-1pt} \mathcal{P}_{\langle m,n\rangle}\hspace{-1pt} \bigr)\hspace{-2pt} =\hspace{-2pt} \bigl\{\hspace{-2pt} (\delta\theta, y, \underline{z}, \overline{z})\hspace{-2pt}  :\hspace{-2pt}\eqref{eq:lowlower0},\hspace{-1pt} \eqref{eq:uplower0},\hspace{-1pt} \eqref{eq:yvardom},\hspace{-1pt} \eqref{eq:fori1},\underline z_{mn}, \overline z_{mn}\hspace{-2pt} \in \hspace{-2pt} [0,1] \hspace{-1pt}\bigr\}.
\end{equation}
Substituting the lower bounds of $(\underline{z}_{mn},\overline{z}_{mn})$ from~\eqref{eq:lowlower0},\eqref{eq:uplower0} in \eqref{eq:fori1} then gives

\begin{equation}\label{eq:fori2}
\left|\delta\theta_{mn}\right|\le M 
- \Bigl( \hspace{-0.5pt}\sum_{(i,j)\in \underline{\rho}_{mn}}\hspace{-7pt} y_{ij} -|\underline{\rho}_{mn}|+ 1 \Bigr) \Delta(\rho)- \Bigl( \sum_{(i,j)\in \overline{\rho}_{mn}}\hspace{-7pt} y_{ij} -\left|\overline{\rho}_{mn}\right|+ 1 \Bigr) \Delta(M).
\end{equation}
Rearranging~\eqref{eq:fori2} yields C-PVI~\eqref{eq:fori3}, completing the projection. \hfill $\square$
\end{proof}

Having shown that C-PVIs \eqref{eq:fori3} describe \(\operatorname{conv}(\mathcal{R}^{\delta\theta,y}_{+_{\langle m,n \rangle}})\), we now proceed to a more detailed structural analysis to demonstrate that they are facet-defining, establishing their necessity in the complete description of \(\operatorname{conv}(\mathcal{R}^{\delta\theta,y}_{+_{\langle m,n \rangle}})\). We first verify that the convex hull is full-dimensional.

\begin{proposition}\label{prop:fulldim}
The polytope \(\operatorname{conv}(\mathcal{R}^{\delta\theta,y}_{+_{\langle m,n\rangle}})\), as defined below, is full-dimensional in \(\mathbb{R}^{|C|+1}\).
\begin{subequations}\label{eq:set}
\begin{align}
        \operatorname{conv}(\mathcal{R}^{\delta\theta,y}_{+_{\langle m,n \rangle}})=&\{\;(\delta\theta,y)\in \mathbb{R}^1 \times [0,1]^{|C|}:\notag \\
        & \left|\delta\theta_{mn}\right|\leq  w(\underline{\rho}_{mn}) + \sum_{(i,j)\in \underline{\rho}_{mn}} (M -w_{ij})(1 - y_{ij}),\label{eq:shorter2}\\
       & \left|\delta\theta_{mn}\right|\leq w(\overline{\rho}_{mn}) +\sum_{(i,j)\in \overline{\rho}_{mn}} (M -w_{ij})(1 - y_{ij}),\label{eq:longer2}  \\
       & \left|\delta\theta_{mn}\right|\leq M,\label{eq:validbound2}\\
       &0 \leq y_{ij}\leq 1\quad \forall(i,j) \in C\;\}.
\end{align}
\end{subequations}
\end{proposition}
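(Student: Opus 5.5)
To prove full-dimensionality, we will exhibit $|C|+2$ affinely independent points of $\mathcal{R}^{\delta\theta,y}_{+_{\langle m,n\rangle}}$. Their convex hull then lies in $\operatorname{conv}(\mathcal{R}^{\delta\theta,y}_{+_{\langle m,n\rangle}})$, which will force the dimension to be $|C|+1$. Working with integer points of the original set (rather than with the displayed linear description) avoids any dependence on whether that description is exact, since we only need points inside the convex hull.

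The construction uses the all-ones configuration $y=\mathbf{1}$ together with its unit perturbations. Both path-based constraints are checked the same way at each point.

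1. \emph{Base point.} At $y=\mathbf{1}$, both paths are active. Constraints \eqref{eq:shorter} and \eqref{eq:longer} reduce to $|\delta\theta_{mn}|\le w(\underline{\rho}_{mn})$ and $|\delta\theta_{mn}|\le w(\overline{\rho}_{mn})$. Since $M\ge w(C)\ge w(\overline{\rho}_{mn})$, constraint \eqref{eq:validbound} is not binding. We take the two points $p_0=(0,\mathbf{1})$ and $p_1=(\epsilon,\mathbf{1})$ with $0<\epsilon\le w(\underline{\rho}_{mn})$. This is possible because all weights are positive, so $w(\underline{\rho}_{mn})>0$.

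2. \emph{Unit perturbations.} For each line $(i,j)\in C$, we take $q_{ij}=(0,\mathbf{1}-e_{ij})$. With $\delta\theta_{mn}=0$, every inequality in \eqref{eq:shorter}--\eqref{eq:validbound} holds trivially, because every right-hand side is nonnegative: $M\ge w_{ij}$ for all lines and $M\ge 0$. Hence each $q_{ij}$ lies in $\mathcal{R}^{\delta\theta,y}_{+_{\langle m,n\rangle}}$.

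3. \emph{Affine independence.} We form the differences from $p_0$. These are $p_1-p_0=\epsilon e_{\delta\theta}$ and $q_{ij}-p_0=-e_{ij}$ for every $(i,j)\in C$. These $|C|+1$ vectors are, up to sign and scaling, the standard basis of $\mathbb{R}^{|C|+1}$, so they are linearly independent. Therefore the $|C|+2$ points are affinely independent, and $\dim\operatorname{conv}(\mathcal{R}^{\delta\theta,y}_{+_{\langle m,n\rangle}})=|C|+1$.

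The only step needing care is verifying feasibility, specifically that every right-hand side is nonnegative and that $w(\underline{\rho}_{mn})>0$. This follows from $w_{ij}=\overline{f}_{ij}x_{ij}>0$ and from $M=\sum_{(i,j)\in\mathcal{L}} w_{ij}$ dominating every individual weight. There is no real obstacle here. Note that the lines of $C$ are exactly the edges of $\underline{\rho}_{mn}\cup\overline{\rho}_{mn}$, so each $y_{ij}$ appears in at most one of \eqref{eq:shorter} and \eqref{eq:longer}; this does not affect the argument.
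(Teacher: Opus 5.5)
Your proof is correct, and it takes a different route from the paper's.

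The paper works with the displayed linear system. It takes $(0,\tfrac12\mathbf{1}_{|C|})$, notes that every displayed inequality is strict there, and concludes by a perturbation argument that a neighborhood lies in the set. You instead produce $|C|+2$ affinely independent points of the mixed-integer set $\mathcal{R}^{\delta\theta,y}_{+_{\langle m,n\rangle}}$ itself: $(0,\mathbf 1)$, $(\epsilon,\mathbf 1)$, and the $|C|$ single-outage points at $\delta\theta_{mn}=0$. These are feasible because every right-hand side is nonnegative.

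The paper's argument is quicker, but it only certifies an interior point of the displayed system. That system is the continuous relaxation of $\mathcal{R}^{\delta\theta,y}_{+_{\langle m,n\rangle}}$, and it strictly contains the convex hull. For example, put one line $(i,j)\in\underline{\rho}_{mn}$ at $y_{ij}=\tfrac12$ and all other lines at $1$. The displayed inequalities then allow $|\delta\theta_{mn}|$ up to $\min\{w(\underline{\rho}_{mn})+\tfrac12(M-w_{ij}),\,w(\overline{\rho}_{mn})\}$. The true convex hull, and the paper's own C-PVI \eqref{eq:fori3}, cap it at the strictly smaller value $\tfrac12\bigl(w(\underline{\rho}_{mn})+w(\overline{\rho}_{mn})\bigr)$. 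Full-dimensionality of a superset does not transfer to a subset. So the interior-point argument, as written, does not show that the actual convex hull is full-dimensional, which is what the facet argument of Theorem~\ref{theo:theo3} needs.

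Your argument does show this. Your points lie in $\mathcal{R}^{\delta\theta,y}_{+_{\langle m,n\rangle}}$, hence in its convex hull and in every convex relaxation of it, so your proof covers both readings of the statement. It also matches the affine-independence style of the facet proof. The paper's conclusion is nonetheless true, and it can be repaired just as briefly: $(\pm w(\underline{\rho}_{mn}),y)\in\mathcal{R}^{\delta\theta,y}_{+_{\langle m,n\rangle}}$ for every binary $y$, so the hull contains the box $[-w(\underline{\rho}_{mn}),w(\underline{\rho}_{mn})]\times[0,1]^{|C|}$.
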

\begin{proof}
\noindent\emph{Interior Point Construction.} Consider the point \((\delta\theta_{mn}^*, y^*) := (0, \tfrac{1}{2} \mathbf{1}_{|C|})\), where \(\mathbf{1}_{|C|}\) is the $|C|-$vector of ones. This point, with \( y^*_{ij} =\frac{1}{2}\) for every line \((i,j) \in C\), lies strictly in the interior of the bounds $0< y_{ij}< 1$. Moreover, since \(M > w(\overline{\rho}_{mn}) > w(\underline{\rho}_{mn})>0\), the right-hand sides of inequalities \eqref{eq:shorter2}-\eqref{eq:validbound2} are strictly positive at \(y^*\), ensuring strict feasibility.

\medskip
\noindent\emph{Perturbation Argument.}  
 Let \(\varepsilon > 0\), and let \(d \in \mathbb{R}^{|C|}\) satisfy \(\|d\|_\infty < \eta\) for some $\eta>0$. Assume that both $\varepsilon$ and $\eta$ are chosen to be sufficiently small. Define the perturbed point as $(\delta\theta_{mn}^\varepsilon, y^\eta) := \left( \varepsilon, \tfrac{1}{2} \mathbf{1} + d \right)$.
For sufficiently small \(\varepsilon\) and \(\eta\), \(y^\eta\) lies strictly within the open box \((0,1)^{|C|}\), and by continuity, all inequality constraints remain strictly satisfied as well. Therefore, an open neighborhood around \((\delta\theta_{mn}^*, y^*)\) lies entirely within the polytope. It follows that \(\operatorname{conv}(\mathcal{R}^{\delta\theta,y}_{+_{\langle m,n \rangle}})\) has nonempty interior in \(\mathbb{R}^{|C|+1}\) and is therefore full-dimensional. \hfill $\square$
\end{proof}

\smallskip
\begin{theorem}\label{theo:theo3}
C-PVIs \eqref{eq:fori3} are facet-defining for
$\operatorname{conv}\bigl(\mathcal{R}^{\delta\theta ,y}_{+_{\langle m,n\rangle}}\bigr)$.
\end{theorem}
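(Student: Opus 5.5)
Since Proposition~\ref{prop:fulldim} shows that $\operatorname{conv}(\mathcal{R}^{\delta\theta,y}_{+_{\langle m,n\rangle}})$ is full-dimensional in $\mathbb{R}^{|C|+1}$, we will use the standard direct method. By symmetry of the absolute value it suffices to treat the branch $\delta\theta_{mn}\le \text{RHS}(y)$ of \eqref{eq:fori3}; the branch $-\delta\theta_{mn}\le\text{RHS}(y)$ follows by the map $\delta\theta_{mn}\mapsto-\delta\theta_{mn}$, which preserves the set. Validity of \eqref{eq:fori3} is already given by Theorem~\ref{theo:theo2}. What remains is to exhibit $|C|+1$ affinely independent points of $\mathcal{R}^{\delta\theta,y}_{+_{\langle m,n\rangle}}$, with integral $y$, that satisfy the branch with equality. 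Write $\underline\rho:=\underline\rho_{mn}$ and $\overline\rho:=\overline\rho_{mn}$. These two paths partition $C$, so every line lies in exactly one of them.

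We will build the points as follows.
\begin{enumerate}
\item[(a)] The point $p_0=(w(\underline\rho),\mathbf 1)$: all lines are active, the RHS equals $w(\underline\rho)$, and the point is feasible.
\item[(b)] For each $e\in\underline\rho$, the point $p_e$ with $y=\mathbf 1-e_e$ and $\delta\theta_{mn}=w(\underline\rho)+\Delta(\rho)=w(\overline\rho)$. Here the shorter path is inactive and the longer one is active, so the true bound is $w(\overline\rho)$ and the point is feasible.
\item[(c)] For each $e\in\overline\rho$, the point $q_e$ with $y=\mathbf 1-e_e$. The shorter path stays active, so feasibility forces $|\delta\theta_{mn}|\le w(\underline\rho)$. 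However, the RHS of \eqref{eq:fori3} is $w(\underline\rho)+\Delta(M)>w(\underline\rho)$, so $q_e$ is not tight.
\end{enumerate}
Case (c) is the main obstacle. For $e\in\overline\rho$ we therefore plan to deactivate one line on each path: take $y=\mathbf 1-e_{e}-e_{f}$ with a fixed $f\in\underline\rho$. Then the RHS is $w(\underline\rho)+\Delta(\rho)+\Delta(M)=M$, and the set allows $|\delta\theta_{mn}|\le M$, so this point is tight and feasible. This gives $|\overline\rho|$ further points $q_e=(M,\mathbf 1-e_e-e_f)$.

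We then check affine independence. Subtract $p_0$ from the other points. The differences $p_e-p_0=(\Delta(\rho),-e_e)$ for $e\in\underline\rho$ have distinct unit supports on $\underline\rho$. The differences $q_e-p_0=(M-w(\underline\rho),-e_e-e_f)$ for $e\in\overline\rho$ each have a distinct support coordinate in $\overline\rho$. The $y$-parts of these $|C|$ vectors therefore form a matrix that becomes triangular after subtracting $p_f-p_0$ from each $q_e-p_0$. This elimination leaves $(M-w(\overline\rho),-e_e)=(\Delta(M),-e_e)$, so the $y$-block is nonsingular. That gives $|C|$ linearly independent directions, but we need $|C|$ directions spanning the affine hull of the face, which has dimension $|C|$. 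Counting points, $p_0$ together with the $|C|$ differences gives $|C|+1$ points. We will verify linear independence of the $|C|$ difference vectors in $\mathbb{R}^{|C|+1}$ directly: their $y$-components alone are already independent, because the matrix is $-I$ modified by adding $-e_f$ in the columns for $\overline\rho$, and this is unimodular.

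As an alternative, if the paper intends the RHS form through $\overline\zeta$, we can argue via the extended formulation: the face of $\mathcal P_{\langle m,n\rangle}$ where \eqref{eq:fori1} and the lower bounds of \eqref{eq:lowlower0}--\eqref{eq:uplower0} are tight projects onto the C-PVI face. The direct point construction above is simpler, and it is the approach we will present. Finally, we will note that it requires $\Delta(\rho)>0$ and $\Delta(M)>0$, which hold by $M>w(\overline\rho)>w(\underline\rho)$; these ensure that the C-PVI face is proper and distinct from the faces of \eqref{eq:yvardom}.
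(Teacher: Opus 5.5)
Your proposal is correct and follows the paper's proof essentially point for point. It uses the same three families of tight points: the all-active base point; one shorter-path line off with $\delta\theta_{mn}=w(\overline\rho)$; and a fixed shorter-path line plus one longer-path line off with $\delta\theta_{mn}=M$. Independence is likewise read off from the $y$-block of the difference vectors. Your first coordinate $M-w(\underline\rho)$ for $q_e-p_0$ is in fact correct where the paper writes $\Delta(M)$, and your own independence argument shows that positivity of $\Delta(\rho)$ and $\Delta(M)$ is not actually needed.
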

\begin{proof}
 To show that C-PVIs are facet-defining, it suffices to construct \(|C|+1\) affinely independent integer points that satisfy the VIs at equality. By symmetry, we consider the case $\delta\theta_{mn}\geq 0$.
 
\medskip
\noindent
\emph{Construction of Points.}
Let $\mathbf{1}_{\underline{\rho}_{mn}}$ denote the $|\underline{\rho}_{mn}|$‑vector of ones (and similarly for $\overline{\rho}_{mn}$). Define the following $|C|+1$ points, represented as the ordered triple $(\delta\theta_{mn}, [y_{ij}]_{(i,j) \in \underline{\rho}_{mn}},[y_{ij}]_{(i,j) \in \overline{\rho}_{mn}})$:
\begin{itemize}
    \item \textbf{Base point:}
        $P_0:=\bigl(w(\underline{\rho}_{mn}),
              \mathbf{1}_{\underline{\rho}_{mn}},
              \mathbf{1}_{\overline{\rho}_{mn}}\bigr)= (\underbrace{w(\underline{\rho}_{mn})}_{\substack{\delta\theta_{mn}}}, \underbrace{1,1,\dots,1}_{\substack{(i,j)\in \underline{\rho}_{mn}}}, \underbrace{1,1,\dots,1}_{\substack{(i,j)\in\overline{\rho}_{mn}}}),$
which satisfies \eqref{eq:fori3} at equality, since the right-hand side reduces to $w(\underline{\rho}_{mn})$.\medskip
    \item \textbf{Points $P_1 \textbf{ to } P_{|\underline{\rho}|}$:}
        For each $k\in\{1,\dots,|\underline{\rho}_{mn}|\}$ define
            \[
           P_k := \hspace{-3pt}
            \bigl(w(\overline{\rho}_{mn}),
                  \mathbf{1}_{\underline{\rho}_{mn}}-e^{\underline{k}}_{|\underline{\rho}_{mn}|},
                  \mathbf{1}_{\overline{\rho}_{mn}}\bigr)\hspace{-3pt} \xrightarrow{k=1} \hspace{-3pt}\\ \nonumber
                  P_1:=(w(\overline{\rho}_{mn}), 
                \underbrace{0,1,\dots,1}_{\substack{(i,j)\in \underline{\rho}_{mn}}},
                \underbrace{1,\dots,1}_{\substack{(i,j)\in \overline{\rho}_{mn}}}),
            \]
            where $e^{\underline{k}}_{|\underline{\rho}_{mn}|}$ denotes the standard basis vector in $\mathbb{R}^{|\underline{\rho}_{mn}|}$, corresponding to the $k$‑th line in $\underline{\rho}_{mn}$. Each point $P_k$ satisfies \eqref{eq:fori3} at equality.\medskip
    \item \textbf{Points $P_{|\underline{\rho}|+1}\textbf{ to }P_{|\underline{\rho}|+|\overline{\rho}|}$:}
    % \textbf{One line opened on $\underline{\rho}_{mn}$ and $\overline{\rho}_{mn}$.}
        Fix the first line of $\underline{\rho}_{mn}$ as inactive and, for each $\ell\in\{1,\dots,\left|\overline{\rho}_{mn}\right|\}$, define $P_{|\underline{\rho}_{mn}|+\ell}:=
        \bigl(M,
              \mathbf{1}_{\underline{\rho}_{mn}} \hspace{-2pt}-e^{\underline{1}}_{|\underline{\rho}_{mn}|},
              \mathbf{1}_{\overline{\rho}_{mn}} \hspace{-2pt}-e^{\overline{\ell}}_{|\overline{\rho}_{mn}|}\bigr)$, where $e^{\overline{\ell}}_{|\overline{\rho}_{mn}|}$ denotes the standard basis vector in $\mathbb{R}^{\left|\overline{\rho}_{mn}\right|}$, associated with the $\ell$‑th element of $\overline{\rho}_{mn}$. Each point $P_{|\underline{\rho}_{mn}|+\ell}$ satisfies \eqref{eq:fori3} at equality. For $\ell=1$, the corresponding point is $P_{|\underline{\rho}_{mn}|+1} \hspace{-2pt}:=  \hspace{-2pt} (M,
            \underbrace{0,1,\dots,1}_{\substack{(i,j)\in \underline{\rho}_{mn}}},
            \underbrace{0,1,\dots,1}_{\substack{(i,j)\in \overline{\rho}_{mn}}})$.
\end{itemize}

\noindent
\emph{Affine Independence.}
Let $\{v_k := P_k - P_0\}_{k=1}^{|C|}$ denote the set of vectors obtained by translating each point $P_k$ relative to $P_0$. We will show that these vectors are linearly independent, which establishes that $\{P_k\}^{|C|}_{k=0}$ are affinely independent. For $k \in \{1,\ldots,|\underline{\rho}_{mn}|\}$, $v_k = (\Delta(\rho), -e^{\underline{k}}_{|\underline{\rho}_{mn}|}, \mathbf{0}_{\left|\overline{\rho}_{mn}\right|}),$ and for $k = |\underline{\rho}_{mn}|+\ell$, where $\ell \in \{1,\ldots,\left|\overline{\rho}_{mn}\right|\}$, we have $v_k = (\Delta(M), -e^{\underline{1}}_{|\underline{\rho}_{mn}|}, -e^{\overline{\ell}}_{\left|\overline{\rho}_{mn}\right|})$.

\medskip
To establish linear independence, assume scalars $\lambda_1,\ldots,\lambda_{|C|}\in \mathbb{R}$ satisfy $\sum_{k=1}^{|C|} \lambda_k v_k = \mathbf{0}$.
Expanding the linear combination yields the system
\begin{equation}\label{eq:linind}
  \hspace{-5pt} \sum_{k=1}^{|\underline{\rho}_{mn}|} \hspace{-3pt}\lambda_k \bigl(\hspace{-1pt} \Delta(\rho),\hspace{-2pt} -e^{\underline{k}}_{|\underline{\rho}_{mn}|}, \mathbf{0}_{\left|\overline{\rho}_{mn}\right|}\bigr) \hspace{-1pt}+\hspace{-5pt} \sum_{\ell=1}^{\left|\overline{\rho}_{mn}\right|} \hspace{-3pt}\lambda_{|\underline{\rho}_{mn}|+\ell} \bigl(\hspace{-1pt} \Delta(M), \hspace{-2pt}-e^{\underline{1}}_{|\underline{\rho}_{mn}|} , \hspace{-2pt}-e^{\overline{\ell}}_{\left|\overline{\rho}_{mn}\right|} \bigr)\hspace{-3pt}= \hspace{-3pt}\mathbf{0}_{|C|+1}.
\end{equation}
Then, isolating the third block components of each vector $v_k$ in~(\ref{eq:linind}) gives
\begin{equation}\label{eq:thirdblock}
     \sum_{k=1}^{|\underline{\rho}_{mn}|} \lambda_k \cdot \mathbf{0}_{\left|\overline{\rho}_{mn}\right|} - \sum_{\ell=1}^{\left|\overline{\rho}_{mn}\right|} \lambda_{|\underline{\rho}_{mn}|+\ell} \cdot e^{\overline{\ell}}_{\left|\overline{\rho}_{mn}\right|}  = - \sum_{\ell=1}^{\left|\overline{\rho}_{mn}\right|} \lambda_{|\underline{\rho}_{mn}|+\ell} \cdot e^{\overline{\ell}}_{\left|\overline{\rho}_{mn}\right|} = \mathbf{0}_{\left|\overline{\rho}_{mn}\right|}.
\end{equation}
Equation \eqref{eq:thirdblock} holds if and only if $\lambda_{|\underline{\rho}_{mn}|+\ell}=0$ for all $\ell\in\{1,...,\left|\overline{\rho}_{mn}\right|\}$, as the standard basis vectors $e^{\overline{\ell}}_{\left|\overline{\rho}_{mn}\right|}$ are linearly independent. With these terms eliminated, the second block in \eqref{eq:linind} reduces to
\begin{equation}\label{eq:secondblock}
   - \sum_{k=1}^{|\underline{\rho}_{mn}|} \lambda_k \cdot  e^{\underline{k}}_{|\underline{\rho}_{mn}|} - \sum_{\ell=1}^{|\overline{\rho}_{mn}|} \lambda_{|\underline{\rho}_{mn}|+\ell} \cdot e^{\underline{1}}_{|\underline{\rho}_{mn}|}  =  - \sum_{k=1}^{|\underline{\rho}_{mn}|} \lambda_k \cdot  e^{\underline{k}}_{|\underline{\rho}_{mn}|} = \mathbf{0}_{|\underline{\rho}_{mn}|}.
\end{equation}
Similarly, equation \eqref{eq:secondblock} holds if and only if $\lambda_k=0$ for all $k \in \{1,\ldots,|\underline{\rho}_{mn}|\}$, since the vectors $e^{\underline{k}}_{|\underline{\rho}_{mn}|}$ are standard basis vectors and thus linearly independent.

Therefore, $\lambda_k=0$ for $k \in \{1,\ldots,|C|\}$, establishing that the vectors $\{v_k\}^{|C|}_{k=1}$ are linearly independent. It follows that the points $\{P_k\}^{|C|}_{k=0}$ are affinely independent. Thus, C-PVIs \eqref{eq:fori3} are facet-defining for $\operatorname{conv}(\mathcal{R}^{\delta\theta,y}_{+_{\langle m,n\rangle}})$. \hfill $\square$
\end{proof}
\subsection{Separation}
C-PVIs can be separated efficiently by enumerating bus pairs within each cycle and applying a computationally inexpensive screening test. Due to space limitations, we only sketch the procedure. For each cycle $C$ and each bus $m\in C$, we incrementally extend a candidate shorter path $\underline{\rho}_{mn}$ by advancing $n$ along the cycle. We form the complementary path $\overline{\rho}_{mn}$ that closes the cycle and evaluate the cut violation only when the LP solution $(\delta\theta^{*},y^*)$ satisfies a necessary screening condition (e.g., $\lvert \delta\theta^{*}_{mn}\rvert > w(\underline{\rho}_{mn})$). The traversal from $m$ stops once the accumulated path weight exceeds $w(C)/2$ (i.e., $w(\underline{\rho}_{mn})>w(C)/2$). To keep the separation routine tractable, we apply this routine only to a subset of promising cycles identified from the LP solution. Implementation and computational evaluation are left to future work.
\section{Conclusion}\label{sec5}
This work analyzes the polyhedral structure of an angle-based relaxation of DC-OTS and introduces facet-defining cycle-induced path-based valid inequalities (C-PVIs). Unlike prior approaches, the proposed inequalities directly tighten angle-difference bounds across the network, and they can be calculated in polynomial time. When the selected subsets of lines in the cycle are contiguous, C-PVIs exhibit structural similarities with cycle-based VIs (CVIs) of Kocuk et al. \cite{kocuk2016strong} that motivate a comparison; however, a direct comparison is not possible because CVIs are stated in $(f,y)$-space. Future work will map CVIs into $(\theta,y)$-space to compare the resulting implied angle-difference bounds, extend the framework to more general substructures, benchmark CVIs and C-PVIs, and evaluate their impact on large-scale instances.

\begin{credits}
% \subsubsection{\ackname} A bold run-in heading in small font size at the end of the paper is
% used for general acknowledgments, for example: This study was funded
% by X (grant number Y).

\subsubsection{\discintname}
\small The authors have no competing interests to declare that are relevant to the content of this article.
\end{credits}

\end{document}